\newcommand{\set}[1]{\{#1\}}
\newcommand\eat[1]{}
\newcommand{\W}{$\mathcal{W}$}
\newcommand{\B}{$\mathcal{B}$}
\newcommand{\BB}{B}
\newcommand{\WW}{W}
\newcommand{\nash}{\text{{NS}}\xspace}
\newcommand{\indiv}{\text{{IS }}\xspace}
\newcommand{\contract}{\text{CIS}\xspace}
\newcommand{\pref}{\succsim\xspace}
\newcommand{\Pref}[1][]{
	\ifthenelse{\equal{#1}{}}{\mathrel \succsim}{\mathop{\succsim_{#1}}}
}                                          
\newcommand{\sPref}[1][]{                  
	\ifthenelse{\equal{#1}{}}{\mathrel \succ}{\mathop{\succ_{#1}}}
}                                          
\newcommand{\Indiff}[1][]{                 
	\ifthenelse{\equal{#1}{}}{\mathrel \sim}{\mathop{\sim_{#1}}}
}
\newcommand{\prefset}[1][]{\ifthenelse{\equal{#1}{}}{\mathcal{\succsim}}{\mathcal{\succsim}_{#1}}}
\newcommand{\midd}{\makebox[2ex]{$|$}}
\newcommand{\middd}{\makebox[2ex]{$\|$}}
\newcommand{\allelse}[1][3em]{\;\rule[.5ex]{#1}{.4pt}\;}
\newcommand{\clauses}{\mathcal X}
\newcommand{\vars}{\mathcal P}
\newcommand{\form}{\varphi}
\newcommand{\clause}{X}
\newcommand{\prop}{p}
\newcommand{\lit}{x}
\newcommand{\valuation}{v}
\newlength{\wordlength}
\newcommand{\SAT}{\text{{\sc Sat}}\xspace}
\title{Individual-based stability in hedonic games depending on the best or worst players
\thanks{This material is based upon work supported by the Deutsche Forschungsgemeinschaft under grants BR-2312/6-1 (within the European Science Foundation's EUROCORES program LogICCC) and BR~2312/7-1. Many thanks to Felix Brandt and Matthew Jackson for some useful discussions on coalition formation.
}}
\author{Haris Aziz \and Paul Harrenstein \and Evangelia Pyrga}
\institute{%
  Instit\"ut f\"ur Informatik,
  Technische Universit\"at M\"unchen, 
  85748 M\"unchen, Germany \\
  \email{\{aziz,harrenst,pyrga\}@in.tum.de}}
\begin{document}

\maketitle



\begin{abstract}
	
	
	We consider hedonic coalition formation games in which each player has preferences over the other players and his preferences over coalition structures are based on the best player ($\mathcal{B}$-/B-hedonic games) or the worst player ($\mathcal{W}$/W-hedonic games) in his coalition.
		We show that for \B-hedonic games, an individually stable partition is guaranteed to exist and can be computed efficiently. Similarly, there exists a polynomial-time algorithm which returns a Nash stable partition (if one exists) for \B-hedonic games with strict preferences. 
	It is also shown that for \BB- and \W-hedonic games, 
	checking whether a Nash stable partition or an individually stable partition exists is NP-complete even in some cases for strict preferences. 
	As a result of our investigation, we identify a key source of intractability in compact coalition formation games in which preferences over players are extended to preferences over coalitions.
\end{abstract}

%
%
%

\section{Introduction}

%
%
%
%
%
%
%
%
%
%
%
%
%
%
%
%
%
%
%
%
%
%
%
%
%

\begin{table*}
       \centering
       \scalebox{1.1}{
               \begin{tabular}{lllll}
                       \toprule
                       && \nash & \indiv&       \contract \& IR\\\midrule
                  \B&(general)                                            &?&in P  (Th.~\ref{th:is-exists})                                 &    in P (Prop.~\ref{th:W-CIS})               \\
                       \B&(strict preferences)                         &in P (Th.~\ref{th-ns-strict-b})                                &           in P  (Th.~\ref{th:is-exists})                                                                      & in P (Prop.~\ref{th:W-CIS})    \\
\midrule
                       \BB&(general)                                            &NPC (Th.~\ref{theorem:ww-bb-srict-ns-hard})                                                                                             & NPC (Th.~\ref{theorem:b-is-hard})                                                                               &in P (Prop.~\ref{th:W-CIS})\\
                      \BB&(strict preferences)                         &NPC (Th.~\ref{theorem:ww-bb-srict-ns-hard})    &NPC (Th.~\ref{theorem:b-is-hard})              &in P (Prop.~\ref{th:W-CIS}) \\
                       \BB&(no unacceptability)                         &in P (Obs.~\ref{obs:GC-Nash})                                  &in P (Obs.~\ref{obs:GC-Nash})                  &in P (Prop.~\ref{th:W-CIS}) \\\midrule

              \W/\WW &(general)                                            &NPC (Th.~\ref{theorem:ww-bb-srict-ns-hard})                              &NPC (Th.~\ref{theorem:w-is-hard})              &in P (Prop.~\ref{th:W-CIS})\\
                       \W/\WW &(strict preferences)                         &NPC (Th.~\ref{theorem:ww-bb-srict-ns-hard})    &     ?                                                                          &in P (Prop.~\ref{th:W-CIS}) \\
                       \W/\WW &(no unacceptability)                         &in P (Obs.~\ref{obs:GC-Nash})                                  &in P (Obs.~\ref{obs:GC-Nash})                  &in P (Prop.~\ref{th:W-CIS}) \\\midrule

                       
               \end{tabular}
       }
\caption{Complexity of individual-based stability: the positive results even hold for computation of stable partitions whereas the NP-completeness results even hold for checking the existence of a stable partition.}
\label{table:bestworst-complexity}
\end{table*}

Coalition formation plays a fundamental role in various multiagent settings.
The following quotation from \citep{BoJa02a} nicely highlights the significance of coalition formation:
\begin{quote} \emph{``Coalition formation is of fundamental importance in a wide variety of
social, economic, and political problems, ranging from communication and
trade to legislative voting. As such, there is much about the formation
of coalitions that deserves study.''}\end{quote}

\emph{Coalition formation games}, as introduced by \citet{DrGr80a}, provide a simple but versatile formal model that allows one to focus on coalition formation. In many situations it is natural to assume that a player's appreciation of a coalition structure only depends on the coalition he is a member of and not on how the remaining players are grouped. 
Initiated by  \citet{BKS01a} and \citet{BoJa02a}, much of the work on coalition formation now concentrates on these so-called \emph{hedonic games}. 

The main focus in hedonic games has been on notions of stability for coalition structures. \citet{BoJa02a} formalized individual-based stability concepts (\emph{Nash stability (NS)}, \emph{individual stability (IS)}, and \emph{contractual individual stability (CIS)}) and group-based stability concepts (\emph{core (C)} and \emph{strict core (SC)}) in the context of hedonic games. 
The most prominent examples of hedonic games are two-sided matching games in which only coalitions of size two are admissible~\citep{RoSo90a}. 
We refer to~\citet{Hajd06a} and \citet{Cech08a} for a critical overview of hedonic games. 

Hedonic games involve preferences over sets of players. This leads to the challenge of succinctly representing hedonic games~\citep[see e.g., ][]{ElWo09a}. 
A natural method to represent them is for each player to have preferences over the individual players and then extend these preferences to preferences over sets of players. 
This method includes \emph{additively separable} preferences in which cardinal utilities are considered~\citep{ABS11b,BoJa02a}. 
Other natural ways to represent preferences over coalitions is to base them on the best player in the coalition (e.g., \B-preferences) or on the worst player in the coalition (e.g., \W-preferences). Preferences over sets of objects based on the worst or best object in the set is a standard method for preference extension which satisfies various desirable axioms~\citep{BBP04a}. Such preference extensions can serve as building blocks to form more complex preference extensions.


\B-preferences and \W-preferences are two of the most natural ordinal-based compact representations for hedonic games~\citep{CeRo01a,CeHa02a,CeHa04a}. 
\B-preferences model scenarios where players are `optimistic' in nature and each player's evaluation of a coalition depends on his most favored players in that coalition~\citep{Hajd06a}. 
On the other hand, \W-preferences model `pessimistic' players where the player's happiness depends on the least favored players in his coalition. This is based on the premise that \emph{`a team is only as good as the weakest link'.}
We refer to hedonic games with \B-preferences and \W-preferences as \B-hedonic games and \W-hedonic games respectively.

In many distributed settings, agents have limited communication with other agents or insufficient trust over others which motivates the use of individual-based stability concepts rather than group-based stability concepts to model coalition formation. Compared to the group-based stability concepts such as the core, the complexity of individual-based stability concepts has not been examined for \W- and \B-hedonic games. In fact \citet{Hajd06a} points out that 
\begin{quote}\emph{``It would be interesting to study the computational complexity questions also
for problems of deciding the existence of a Nash stable, an individually stable
and a contractually individually stable partition in games with \B, \W, $[\ldots]$ preferences.''} \end{quote}
We answer the questions of \citet{Hajd06a} and present a thorough investigation of individual-based stability in hedonic games in which preferences are derived from the best or worst player in the coalition. 
Our analysis includes positive results like the guaranteed existence of individually stable partitions for \B-hedonic games and also an insight into a key cause of intractability of stable partitions in hedonic games based on preferences over players.

Along with \B- and \W-hedonic games, we consider two new variants, namely
the \BB- and \WW-hedonic games, 
with the main difference being how unacceptability of
a player is perceived. In particular, while the presence of an unacceptable
player in a coalition in \B-hedonic games becomes irrelevant under the presence
of other players in the same coalition, in \BB- and \WW- hedonic games a single 
unacceptable player suffices to make the coalition unattractive.
\WW-hedonic games coincide with \W-hedonic games when individual
rational outcomes are considered. On the other hand, \BB-hedonic games are not
equivalent to \B-hedonic games: 
unacceptability of coalitions is defined differently and the sizes of
coalitions are not important.
\BB-hedonic games present a number of interesting properties. They are defined
in a symmetric way to \W- for which the coalition size does not affect the
preferences of players. Secondly, marriage games are a subclass of the
intersection of \BB- and \W-hedonic games and this highlights the symmetry
between \BB- and \WW-hedonic games. Thirdly, our definition of \BB-hedonic games depends very
naturally on the meaning of unacceptable players, i.e., if a player $i$ finds
player $j$ unacceptable, he would rather be alone than ever be with $j$ even in
the presence of other players. 
Interestingly, our computational analysis of \BB-hedonic games and \B-hedonic games helps highlight
the source of intractability in a large family of coalition formation games. 

\paragraph{Contributions}

We present a number of computational results regarding individual-based stability in very natural models of coalition formation games.
We show that for \B-hedonic games, at least one individually stable partition exists and it can be computed in linear time. A Nash stable partition may not exist. However, it can be checked in linear time whether a Nash stable exists for \B-hedonic games with strict preferences, 
and in case of existence, such a partition can be returned. It is also shown that a contractually individually stable and individually rational partition can be computed in polynomial time for all the games considered.

For \BB- and \W-hedonic games, 
checking whether a Nash stable partition exists is NP-complete even when preferences are strict. 
Also, for \BB- and \W-hedonic games, 
checking whether an individually stable partition exists is NP-complete. For \BB-hedonic games, the result holds even if preferences are strict.

We obtain a general insight that \emph{in coalition formation games based on extensions of preferences over players to preferences over sets, the following property leads to intractability:
the presence of an unacceptable player makes the coalition unacceptable}. Our results are summarized in Table~\ref{table:bestworst-complexity}, which gives an almost complete
characterization of the complexity of individual-based stability in \B-, \BB- and \W-hedonic games. Since we consider very restrictive models of coalition formation games, our hardness results carry over to more elaborate settings.

\section{Related work}

Computational complexity of computing stable outcomes for coalition formation games is an active area of research in theoretical computer science~\citep{Cech08a}, game theory~\citep{Hajd06a}, and multiagent systems~\citep[see \eg][]{ABS11b, Gasa11a}. Coalition formation games have also received attention from the artificial intelligence community, where the focus has generally been on computing partitions that give rise to the greatest social welfare \citep[\eg][]{SLA+99a}. 

Both \W- and \B-hedonic games were first introduced by \citet{CeRo01a}. 
Since then, the complexity of computing and verifying stable partitions for \W- and \B-hedonic games has been studied~\citep[see \eg][]{CeHa02a,CeHa04a}. For \W-hedonic games, it is NP-hard to check whether the core is non-empty. The core can be empty even if preferences are strict~\citep{CeHa04a}. 
For \B-hedonic games, \citet{CeHa02a} showed that it is NP-hard to check whether the core or strict core is empty. 
For \B-hedonic games with strict preferences, \citet{CeRo01a} proved that the strict core is non-empty and a strict core stable partition can be computed in polynomial time by a generalization of Gale's top trading cycle algorithm~\cite{ShSc74a}.

Recently, \citet{ABH11b} examined the complexity of Pareto optimal partitions for \BB- and \WW-hedonic games. They showed that although computing a Pareto optimal partition is NP-hard for \BB-hedonic games, there exists a polynomial-time algorithm to solve the problem for \WW-hedonic games.

The complexity of individual-based stability has been investigated previously for hedonic games represented by individually rational lists of coalitions~\citep{Ball04a} and also additively separable hedonic games~\citep[see \eg ][]{Olse09a,Gasa10a,SuDi10a, Gasa11a}. In particular, in a paper in last year's AAMAS, \citet{Gasa11a} examined individual-based stability in symmetric additively separable hedonic games.
However, none of the results for other representations of hedonic games imply any of the results for hedonic games based on the best or worst players. Our focus on hedonic games based on the best or worst player is motivated
by the natural ordinal nature of the games, their succinct representation, 
and the hardness of deciding the existence of or computing stable outcomes for other succinct representation of hedonic games.
\section{Preliminaries}

In this section, we review the terminology and notation used in this paper.

\paragraph{Hedonic games}
Let~$N$ be a set of~$n$ players. A \emph{coalition} is any non-empty subset of~$N$. By $\mathcal N_i$ we denote the set of all coalitions player~$i$ may belong to, \ie $\mathcal N_i=\set{S\subseteq N\midd i\in S}$. 
A \emph{coalition structure}, or simply a \emph{partition}, is a partition~$\pi$ of the players $N$ into coalitions, where $\pi(i)$ is the coalition player~$i$ belongs to.

A \emph{hedonic game} is a pair~$(N,\pref)$, where $\pref=(\pref_1,\dots,\pref_n)$ is a \emph{preference profile} specifying the preferences of each player~$i$ as a binary, complete, reflexive, and transitive \emph{preference relation~$\Pref[i]$} over~$\mathcal N_i$. If~$\pref_i$ is also anti-symmetric we say that~$i$'s preferences are \emph{strict}. Note that $S\sPref[i]T$ if $S\Pref[i]T$ but not~$T\Pref[i]S$---\ie if~$i$ \emph{strictly prefers} $S$ to~$T$---and $S\Indiff[i]T$ if both $S\Pref[i]T$ and $T\Pref[i]S$---\ie if~$i$ is \emph{indifferent} between~$S$ and~$T$.

For a player~$i$, a coalition~$S$ in~$\mathcal N_i$ is \emph{acceptable} if for~$i$ being in~$S$ is at least as preferable as being alone---\ie if $S\Pref[i]{\set i}$---and \emph{unacceptable} otherwise. If in player $i$'s preference over other players, $j \sPref_i i$, then we say that $i$ \emph{likes} $j$.

In a similar fashion, for~$X$ a subset of~$\mathcal N_i$, a coalition~$S$ in~$X$ is said to be \emph{most preferred in~$X$ by~$i$} if $S\Pref[i]T$ for all $T$ in~$X$ and \emph{least preferred in~$X$ by~$i$} if $T\Pref[i]S$ for all $T\in X$. In case~$X=\mathcal N_i$ we generally omit the reference to~$X$. The sets of most and least preferred coalitions in~$X$ by~$i$, we denote by $\max_{\Pref[i]}(X)$ and $\min_{\Pref[i]}(X)$, respectively.  

In hedonic games, players are only interested in the coalition they are in. Accordingly, preferences over coalitions naturally extend to preferences over partitions and we write $\pi\Pref[i]\pi'$ if $\pi(i)\Pref[i]\pi'(i)$.
We also say that partition~$\pi$ is \emph{acceptable} or \emph{unacceptable} to a player~$i$ according to whether~$\pi(i)$ is acceptable or unacceptable to~$i$, respectively. Moreover,~$\pi$ is \emph{individually rational (IR)} if~$\pi$ is acceptable to all players.

\paragraph{Classes of hedonic games}
The number of coalitions grows exponentially in the number of players. 
In this sense, hedonic games are relatively large objects and for algorithmic purposes it is often useful to look at classes of games that allow for concise representations. 

We now describe classes of hedonic games in which the players' preferences over coalitions are naturally induced by their preferences over the other players. Therefore, we will use the same notation $\pref_i$ for each player $i$'s preferences over players and also over coalitions.
For~$\pref_i$, preferences of player~$i$ over players, we say that a player~$j$ is \emph{acceptable} to~$i$ if $j\mathop{\pref_i}i$ and \emph{unacceptable} otherwise. 



For a subset~$J$ of players, we denote by $\max_{\pref_i}(J)$ and $\min_{\pref_i}(J)$ the sets of the most and least preferred players in~$J$ by~$i$, respectively. We will assume that $\max_{\pref_i}(\emptyset)=\min_{\pref_i}(\emptyset)=\{i\}$. Let $i\in N$ and let $S, T \in {\mathcal N_i}$.

\begin{itemize}
\item In a \emph{\BB-hedonic game}, the preferences~$\pref_i$ of a player~$i$ over players extend to preferences over coalitions in such a way that 
we have $S\Pref[i]T$ if and only if 
\begin{enumerate}
\item some $j$ in~$T$ is unacceptable to~$i$ or 
\item neither $S$ nor $T$ contains a player unacceptable to $i$ and for each $s\in \max_{\pref_i}(S\setminus \{i\})$ and $t\in \max_{\pref_i}(T\setminus \{i\})$,  $s\succsim_i t$.
\end{enumerate}



\item In a \emph{\WW-hedonic game $(N,\pref)$}, we have $S\Pref[i]T$ if and only if 
\begin{enumerate}
	\item some $j$ in~$T$ is unacceptable to~$i$, or
	\item for each $s\in \min_{\pref_i}(S\setminus \{i\})$ and $t\in \min_{\pref_i}(T\setminus \{i\})$,  $s\succsim_i t$.
	
\end{enumerate}

\item In \emph{hedonic games with \W-preferences} (which we will refer to as \W-hedonic games), $S\succsim_i T$ if and only if

\begin{itemize}
\item[]	 for each $s\in \min_{\pref_i}(S\setminus \{i\})$ and $t\in \min_{\pref_i}(T\setminus \{i\})$,  $s\succsim_i t$. 
\end{itemize}

%
\item In \emph{hedonic games with \B-preferences} (which we will refer to as \B-hedonic games), $S \sPref[i] T$ if and only if 
\begin{enumerate}
	\item for each $s\in \max_{\pref_i}(S\setminus \{i\})$ and $t\in \max_{\pref_i}(T\setminus \{i\})$, $s\succ_i t$, or
	\item for each $s\in \max_{\pref_i}(S\setminus \{i\})$ and $t\in \max_{\pref_i}(T\setminus \{i\})$, $s\sim_i t$ and $|S|<|T|$.
\end{enumerate}
\end{itemize}

\WW-hedonic games are equivalent to hedonic games with $\mathcal W$\!-preferences if only individually rational outcomes are considered. Unlike hedonic games with $\mathcal B$-preferences, \BB-hedonic games are defined in analogy to \WW-hedonic games and the preferences are not based on coalition sizes~\citep[cf. ][]{CeRo01a}.

\begin{example}
Consider the preferences of player $1$ over other players:
$$2\succ_1 3 \succ_1 1 \succ_1 4$$
Then, in corresponding \BB-, \WW-, \B-, and \W-hedonic game, the preferences of player $1$ over coalitions which include $1$ are induced as follows.
\begin{itemize}
\item \BB-hedonic game: $\{1,2\}\sim_1 \{1,2,3\}\succ_1 \{1,3\}\succ_1 \{1\} \succ_1 \{1,4\}\sim_1 \{1,2,4\}\sim_1 \{1,3,4\}\sim_1 \{1,2,3,4\}$.
\item \WW-hedonic game: $\{1,2\}\succ_1 \{1,2,3\}\sim_1 \{1,3\}\succ_1 \{1\} \succ_1 \{1,4\}\sim_1 \{1,2,4\}\sim_1 \{1,3,4\}\sim_1 \{1,2,3,4\}$.
\item \B-hedonic game: $\{1,2\}\succ_1 \{1,2,3\}\sim_1 \{1,2,4\}\succ_1 \{1,2,3,4\} 
\succ_1 \{1,3\}\succ_1 \{1,3,4\}\succ_1 \{1\} \succ_1 \{1,4\}$.
\item \W-hedonic game: $\{1,2\}\succ_1 \{1,2,3\}\sim_1 \{1,3\}\succ_1 \{1\} \succ_1 \{1,4\}\sim_1 \{1,2,4\}\sim_1 \{1,3,4\}\sim_1 \{1,2,3,4\}$.
\end{itemize}

Note that player $1$'s preferences over other players lead to the same preferences over coalitions in both \WW- and \W- hedonic game. 
\end{example}




\paragraph{Stability Concepts}

We now present the standard stability concepts for hedonic games. 
A partition $\pi$ is \emph{individually rational (IR)} if each player does as well in his current coalition as by being alone, i.e., for all $i\in N$, $\pi(i) \succsim_i  \{i\}$. 
The following are standard stability concepts based on deviations by individual players.
	\begin{itemize}
\item A partition is  \emph{Nash stable (NS)} if no player can benefit by 
	moving from his coalition $S$ to another (possibly empty) coalition $T$.
\item A partition is \emph{individually stable (IS)} if no player can
	benefit by moving from his coalition $S$ to another existing (possibly empty) coalition $T$  while not making the members of $T$ worse off. 
	\item A partition is  \emph{contractually individually stable (CIS)} if no
	player can benefit by moving from his coalition $S$ to another existing (possibly empty) coalition
	$T$ while making neither the members of $S$ nor the members of
	$T$ worse off.
	\end{itemize}
	
	\begin{figure}[b]
    \centering
    \scalebox{1.2}{
	\begin{tikzpicture}
		\tikzstyle{pfeil}=[->,>=angle 60, shorten >=1pt,draw]
		\tikzstyle{onlytext}=[]
		
		\node[onlytext] (NS) at (1,0) {NS};
		\node[onlytext] (SC) at (3,0) {SC};
		\node[onlytext] (IS) at (2,-1.5) {IS};
		\node[onlytext] (C) at (4,-1.5) {C};
		\node[onlytext] (CIS) at (2,-3) {CIS \& IR};
	
		\draw[pfeil] (NS) to (IS);
		\draw[pfeil] (SC) to (IS);
		\draw[pfeil] (IS) to (CIS);
		\draw[pfeil] (SC) to (C);
	\end{tikzpicture}
	}
    \caption{Inclusion relationships between stability concepts. For example, every Nash stable partition is also individually stable.}\label{fig:tnfigure}
\end{figure}
	
	We also define standard stability concepts based on deviations by \emph{groups} of players.
\begin{itemize}
\item A coalition $S \subseteq N$ \emph{blocks} a partition $\pi$, if each
player $i \in S$ strictly prefers $S$ to his current coalition $\pi(i)$ in
the partition $\pi$. A partition which admits no blocking coalition is said to be in the \emph{core (C)}.
\item A coalition $S \subseteq N$ \emph{weakly blocks} a partition $\pi$,
if each player $i \in S$ weakly prefers $S$ to $\pi(i)$ and there exists at least one player $j \in S$ who strictly prefers $S$ to his current coalition $\pi(j)$. A partition which admits no weakly blocking coalition is in the \emph{strict core (SC)}. 
\end{itemize}
The inclusion relationships between stability concepts depicted in Figure~1 follow from the definitions of the concepts. Note that depending on the context, we will denote by NS, IS, and CIS either Nash stable, individually stable, and contractual individually stable respectively or Nash stability, individual stability, and contractual individual stability respectively.


\eat{
\begin{figure}
\begin{center}
	\scalebox{0.5}{
	\begin{tikzpicture}
	\Large
		\tikzstyle{pfeil}=[->,>=angle 60, shorten >=1pt,draw]
		\tikzstyle{onlytext}=[]
		
		\node[onlytext] (NS) at (1,0) {NS};
		\node[onlytext] (SC) at (3,0) {SC};
		\node[onlytext] (IS) at (2,-1.5) {IS};
		\node[onlytext] (C) at (4,-1.5) {C};
		\node[onlytext] (CIS) at (2,-3) {CIS \& IR};
	
		\draw[pfeil] (NS) to (IS);
		\draw[pfeil] (SC) to (IS);
		\draw[pfeil] (IS) to (CIS);
		\draw[pfeil] (SC) to (C);
	\end{tikzpicture}
	}
	\end{center}

	\caption{Inclusion relationships between stability concepts. For example, every Nash stable partition is also individually stable.}
	\label{fig:relations}
	\end{figure}	
}
\normalsize

	We first note that a partition which is both CIS and IR can be computed in polynomial time for both \BB- and \W-hedonic games:

\begin{proposition}\label{th:W-CIS}
A CIS and IR partition can be computed in polynomial time for \B-,\BB-, \W-, and \WW-hedonic games. 
\end{proposition}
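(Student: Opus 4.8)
The plan is to exhibit a simple iterative procedure that builds a partition by repeatedly extracting a coalition that is "self-sufficient" in a suitable sense, and to argue that the resulting partition is both individually rational and immune to any single-player contractual deviation. The key observation is that for all four representations, a player $i$'s evaluation of a coalition $S\ni i$ is governed entirely by one representative player of $S\setminus\{i\}$ (the best one for \B-/\BB-games, the worst one for \W-/\WW-games, with the convention $\max_{\pref_i}(\emptyset)=\min_{\pref_i}(\emptyset)=\{i\}$), so acceptability and the direction of preference are both easy to read off locally.

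First I would observe that the partition into singletons, $\pi=\{\{i\}:i\in N\}$, is trivially IR, since $\pi(i)=\{i\}\Pref[i]\{i\}$ for every $i$. The only question is whether it is CIS. A deviation of player $i$ from $\{i\}$ to an existing coalition $T$ requires that $i$ strictly prefers $\{i\}\cup T$ to $\{i\}$ (so $i$ must like some player in $T$, or in the \B-case $T$ must be acceptable), that no member of $T$ is made worse off, and — crucially — since $i$'s old coalition $S=\{i\}$ becomes empty, there is no constraint from $S$. So the singleton partition need not be CIS, and a more careful construction is needed.

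The construction I would use: process players in a way that greedily forms a coalition around a "locally optimal" player. Concretely, pick any player $i$ and let $T$ be a coalition containing $i$ that is a most-preferred acceptable coalition for $i$ among subsets of the remaining players, chosen to be inclusion-minimal with that value (e.g. $T=\{i,j\}$ where $j\in\max_{\pref_i}$ of the remaining likeable players, if any exists, else $T=\{i\}$). Remove $T$ from the player set, record $T$ as a block of $\pi$, and repeat. This runs in polynomial time. For IR: every block is acceptable to the player $i$ who "anchored" it by construction, and one must check the other members are also not below their singleton — this is where the choice of $T$ must be made with care (for \B-/\BB- one wants a two-player coalition $\{i,j\}$ with $i,j$ mutually acceptable, for \W-/\WW- similarly; such a pair exists iff anyone likes anyone among the survivors, and otherwise the block is the singleton $\{i\}$). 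For CIS: a player $i$ in block $T_k$ who wants to move to an already-existing block $T_\ell$ must not hurt the members of $T_k$; but the anchor of $T_k$ (and, by the minimality of the construction, every member) is at its best achievable value given the players available when $T_k$ was formed, and any proper subset that remains is at least as good — so the residents of $T_k$ block the departure, or $i$'s move is not strictly improving.

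The main obstacle I expect is the IR bookkeeping for the non-anchor members of each block and making the CIS argument airtight across the four games simultaneously: the asymmetry between \B-games (where coalition size matters and an unacceptable player can be "absorbed") and \BB-/\W-/\WW-games (where a single unacceptable player is fatal, or size is irrelevant) means the "most-preferred acceptable coalition" has to be defined slightly differently per class, and one must verify in each case that (a) such a coalition is genuinely acceptable to all its members and (b) no resident of a formed block can be harmed by a later unilateral departure. I would handle this by reducing every block to either a singleton or a mutually-acceptable pair, which uniformizes the argument: pairs are easiest because removing the departing player leaves a singleton, which is acceptable to the one who stays, so that resident is never harmed, and the departing player gains nothing by moving to another pair or singleton unless he strictly prefers it — which, combined with having picked pairs greedily in decreasing order of the anchors' valuations, yields the contradiction.
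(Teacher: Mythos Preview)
Your proposal takes a quite different route from the paper, and unfortunately the route has a real gap.

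The paper's proof is a two-line local-search argument: start from the partition into singletons (which is trivially IR), and repeatedly apply any available CIS deviation. Every CIS deviation strictly improves one player and harms nobody, so the process is monotone in the vector of players' ``utility levels.'' Because in each of the four games a player's preference over coalitions has only polynomially many equivalence classes (at most $n$ for \BB-, \W-, \WW-games; at most roughly $n^2$ for \B-games since coalition size breaks ties), the process terminates after polynomially many steps, and the terminal partition is CIS and IR by definition.

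Your greedy construction of mutually-acceptable pairs plus residual singletons is a fundamentally different idea, and it does not in general produce a CIS partition. The flaw is in the CIS argument: you correctly observe that a player cannot leave a pair in which the partner strictly likes him (the partner would be harmed), but you do not handle the case of a \emph{singleton} player joining an existing block. A singleton has no co-members to object to his departure, so the only obstacle to his deviation is whether the target coalition welcomes him---and in \BB-games (also \B-games), adding an acceptable but non-top player never hurts anyone already in the coalition. Concretely, take the \BB-hedonic game on $N=\{1,2,3\}$ with
\[
1:\; 2\succ_1 3\succ_1 1,\qquad 2:\; 1\succ_2 3\succ_2 2,\qquad 3:\; 1\succ_3 2\succ_3 3.
\]
Your greedy procedure forms the pair $\{1,2\}$ and leaves $3$ as a singleton. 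But $3$ strictly prefers $\{1,2,3\}$ to $\{3\}$, and both $1$ and $2$ are indifferent between $\{1,2,3\}$ and $\{1,2\}$ (their best co-member is unchanged and $3$ is acceptable to each). Hence $3$'s move is a valid CIS deviation, and $\{\{1,2\},\{3\}\}$ is not CIS. The ``decreasing order of anchors' valuations'' clause does not help here: every pairing order yields one pair and one singleton, and the singleton can always join.

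If you want to salvage a constructive approach, note that continuing to absorb such joining singletons \emph{is} precisely the paper's local search; once you allow that, you are back to the potential-function argument, and the pair construction is an unnecessary detour.
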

\begin{proof}
Take the individually rational partition of singletons. 
If the partition is CIS, we are done. 
Otherwise, if there is a feasible CIS deviation, we let the deviation take place. 
In each CIS deviation at least one player strictly improves his utility and no player's utility decreases. 
Since there can only be a maximum of a polynomial number of CIS deviations ($n(n-1)$ for \BB-, \W-, and \WW-hedonic games---and $n^2(n-1)$ in the case of \B-hedonic games), a CIS and IR partition is obtained in polynomial time. 
\end{proof}

Moreover, for \B-, \BB-, \W- and \WW-hedonic games, individual-based stability can be verified in polynomial-time. This means that checking the existence of individual-based stability is in NP. Since, \WW-hedonic games coincide with \W-hedonic games when individually rational outcomes are considered, all our results for Nash stability and individual stability equivalently apply to \WW- and \W-hedonic games. Therefore, from now on, we will only focus on \W-hedonic games rather than both \WW- and \W-hedonic games.

\section{\BB- and \W-hedonic games: Nash stability}

In this section, we consider Nash stability in \BB- and \W-hedonic games.

\begin{observation}\label{obs:GC-Nash}
For both \W- and \BB-hedonic games, if preferences contain no unacceptable players, then the partition consisting of the grand coalition is Nash stable and therefore individually stable.
\end{observation}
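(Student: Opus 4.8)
The plan is to argue directly that the partition $\pi=\set N$ consisting of the grand coalition admits no beneficial single-player deviation, so that it is Nash stable; individual stability then follows for free from the inclusion NS $\Rightarrow$ IS shown in Figure~\ref{fig:tnfigure}. The crucial structural point is that in $\pi=\set N$ the only coalition present is $N$ itself, so the sole move available to a player~$i$ is to leave~$N$ and form the singleton $\set i$ (equivalently, to join the empty coalition). Hence it is enough to verify $N\Pref[i]\set i$ for every $i\in N$. The case $n=1$ is trivial since then $N=\set i$, so I would assume $n\ge 2$.

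For \W-hedonic games, I would unwind the definition: $N\Pref[i]\set i$ holds iff $s\pref_i t$ for all $s\in\min_{\pref_i}(N\setminus\set i)$ and all $t\in\min_{\pref_i}(\set i\setminus\set i)=\min_{\pref_i}(\emptyset)=\set i$, i.e. iff the least-preferred player(s) of~$i$ among $N\setminus\set i$ are acceptable to~$i$. Since by hypothesis the game has no unacceptable players, every $j\in N\setminus\set i$ satisfies $j\pref_i i$, in particular the minimal ones, so $N\Pref[i]\set i$. For \BB-hedonic games I would use that neither~$N$ nor~$\set i$ contains a player unacceptable to~$i$ (there are none), so clause~(2) of the definition applies: $N\Pref[i]\set i$ iff $s\pref_i t$ for all $s\in\max_{\pref_i}(N\setminus\set i)$ and all $t\in\max_{\pref_i}(\emptyset)=\set i$; again, since every player is acceptable, the most-preferred player of~$i$ in $N\setminus\set i$ is $\pref_i i$, giving $N\Pref[i]\set i$. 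In both cases no player can strictly gain by defecting to a singleton, so $\pi=\set N$ is Nash stable and therefore individually stable.

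There is essentially no obstacle here beyond correctly reading off the two preference-extension definitions; the one point to be careful about is the stated convention $\max_{\pref_i}(\emptyset)=\min_{\pref_i}(\emptyset)=\set i$, which is exactly what makes the comparison against the singleton $\set i$ reduce to a comparison against~$i$ himself, and hence what lets ``no unacceptable players'' do all the work.
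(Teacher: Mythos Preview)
Your argument is correct and carefully unwinds the definitions; the paper itself offers no proof for this observation, treating it as immediate, so your write-up is in fact more detailed than what appears there.
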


We present an example of a \W- or \BB-hedonic game for which there is no NS partition. This holds even if preferences are strict but do allow unacceptability.
\begin{example}
Consider the hedonic game $(N,\pref)$ where $N=\{1,2\}$ such that $1$ has no other acceptable players and player $2$ likes the company of $1$. Therefore, 
$\{1\}\succ_1 \{1,2\}$ and $\{1,2\}\succ_2 \{2\}$. 	We see that the game has no NS partition since player $2$ joins player $1$ and then player $1$ leaves player $2$ alone. We call this game the \emph{stalker game} where player $1$ is clearly being stalked.
\end{example}

%

We first prove that for both \W- and \BB-hedonic games, deciding whether a NS partition exists is NP-complete. 
Theorem~\ref{theorem:B-FP-hard} will be subsumed by a later result which states that for \W-hedonic and \BB-hedonic games, deciding whether a NS partition exists is NP-complete even if preferences are strict. However, we present Theorem~\ref{theorem:B-FP-hard} to provide better intuition and as a warm-up for the technically more involved proofs of Theorems~\ref{theorem:b-is-hard} and \ref{theorem:ww-bb-srict-ns-hard}.

\begin{theorem}\label{theorem:B-FP-hard}
	For \W-hedonic and \BB-hedonic games, deciding whether a NS partition exists is NP-complete. 
\end{theorem}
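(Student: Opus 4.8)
The plan is to prove NP-completeness by first establishing membership in NP and then giving a polynomial-time reduction from a known NP-complete problem. Membership is essentially free: by the remark before the theorem, individual-based stability (in particular Nash stability) can be verified in polynomial time, so a nondeterministically guessed partition can be checked efficiently, putting the existence problem in NP. The real work is the hardness direction, and since the theorem speaks of both \W- and \BB-hedonic games I would aim for a single construction (or a pair of closely parallel constructions) that works for both, exploiting the fact that in both classes a single unacceptable player can wreck a coalition (for \W because the worst player in the coalition then drags the valuation below the singleton; for \BB by definition). The paper's own hint — that the key source of intractability is precisely ``the presence of an unacceptable player makes the coalition unacceptable'' — tells me the gadget should force choices via unacceptability.

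I would reduce from a constraint-satisfaction-style problem, and the natural candidate given the later strict-preference strengthening is something like \textsc{Sat} or \textsc{Exact Cover} / \textsc{Graph Colouring}; I'll sketch it with an \textsc{Sat}-flavoured instance. For each variable introduce a small cluster of players whose only stable configurations correspond to the two truth assignments — the ``stalker game'' idea from the preceding example is the right primitive here: a player who wants to be alone but is pursued by others creates instability unless the pursuers are ``parked'' elsewhere, and which of two parking spots they take encodes a bit. Each clause is represented by a player (or gadget) that is happy only if at least one of the literal-players corresponding to its literals is available to join it; unacceptability is used so that a literal-player set to the ``wrong'' value is poison to the clause player, forcing the clause to be satisfied by a correctly-set literal. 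The whole game then has a Nash stable partition if and only if the formula is satisfiable: a satisfying assignment yields a partition in which every variable cluster sits in one of its two stable configurations and every clause player has found an acceptable partner, while conversely any Nash stable partition must resolve each variable cluster consistently (by the stalker-type argument) and must satisfy every clause (else the clause player or its unavoidable companions keep deviating).

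The main obstacle — and where I would spend most of the care — is the interaction between gadgets: I must ensure that a deviation available to ``fix'' one gadget does not inadvertently destabilise another, i.e. that the global partition is Nash stable exactly when every local gadget is in a good state, with no spurious globally-stable partitions that don't encode an assignment and no encoded assignments that fail to be globally stable. Concretely this means choosing the preference lists so that cross-gadget moves are never strictly improving: literal-players should find all players outside their own variable cluster and their legitimately-assigned clause players unacceptable, clause players should find everything except their literal-players unacceptable, and the ``alone-seeking'' anchor players of each stalker gadget should have empty acceptable sets. A secondary technical point is verifying the construction separately against the two slightly different semantics: for \BB-hedonic games unacceptability behaves exactly as assumed, whereas for \W-hedonic games I must double-check that the relevant ``bad'' coalitions are genuinely ranked below the singleton (which holds precisely because the worst acceptable player, or an unacceptable player, determines the valuation), and that where the two definitions diverge — coalition sizes matter for \B but not here — nothing breaks. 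Once these consistency checks are in place, the equivalence ``formula satisfiable $\iff$ NS partition exists'' follows by the two directions sketched above, and combined with NP membership this gives NP-completeness.
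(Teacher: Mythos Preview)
Your high-level plan is right and matches the paper: NP membership is immediate, hardness is by reduction from \textsc{Sat}, the stalker game is the key primitive, and unacceptability is the lever that makes the gadgets work simultaneously for \BB- and \W-preferences. Where you diverge is in the architecture of the reduction, and the paper's choice is both simpler and avoids a difficulty your sketch would run into.

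You propose per-variable clusters whose two stable states encode a bit, with clause players that are happy only when they can \emph{join} one of their literal players. The paper does almost the opposite. There are no variable clusters at all: instead there are just two global anchor players, called~$0$ and~$1$, and each literal player simply chooses which anchor to sit with (true literals with~$1$, false literals with~$0$); individual rationality alone prevents $p$ and $\neg p$ from both joining~$1$. The clause players are designed as stalkers of player~$1$, and---this is the crucial inversion---a clause player finds the literals \emph{occurring in its own clause} unacceptable. Thus a satisfying literal in $\pi(1)$ does not attract the clause player; it \emph{blocks} the clause player from invading $\pi(1)$. When the formula is satisfiable every clause is blocked and the clause players sit harmlessly together; when it is not, some clause can stalk~$1$ and no NS partition exists.

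The advantage of the paper's inversion is that it dissolves the sharing problem your design would face: in your sketch a literal appearing in several clauses must somehow serve all of them, and you have not said how a single literal player can simultaneously be ``available to join'' multiple clause coalitions while also sitting in one of the two stable states of its variable cluster. The paper never needs a literal to be \emph{with} a clause; its mere presence in $\pi(1)$ repels all clauses it satisfies at once. Your approach is not unfixable---indeed the later IS-hardness proof (Theorem~\ref{theorem:b-is-hard}) does introduce per-clause literal copies and a $0_p$ player per variable to handle exactly this kind of sharing---but for NS the global $0$/$1$ construction is markedly cleaner and shorter.
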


\begin{proof}
	By a reduction from \SAT~\citep{GaJo79a}. Let $\form=X_1\wedge\dots\wedge\clause_k$ be a Boolean formula in conjunctive normal form in which all and only the Boolean variables~$\prop_1,\dots,\prop_m$ occur. Now define the \BB-hedonic game $(N,\pref)$, where 
$N=\set{\clause_1,\dots,\clause_k}\cup\set{\prop_1,\neg\prop_1,\dots,\prop_m,\neg\prop_m}\cup\set{0,1}.$
	
	The main idea is to design the ``clause'' players ($\clause_1, \ldots, \clause_k$) so as to be stalkers of player~$1$, like in the stalker game.
Define the preferences~$\pref$ such that for each literal~$\prop$ or $\neg\prop$, and each clause $\clause=(\lit_1\vee\dots\vee\lit_\ell)$,
\[
\renewcommand{\arraystretch}{1.2}
	\begin{array}[b]{r@{\quad}l}
		\prop\colon& (0,1,\;\rule[.5ex]{3em}{.4pt}\;,\phantom{\neg}\prop\middd\neg\prop,\clause_1,\dots,\clause_k)	\\
	\neg\prop\colon& (0,1,\;\rule[.5ex]{3em}{.4pt}\;,\neg\prop\middd\phantom\neg\prop,\clause_1,\dots,\clause_k)	\\							
		\clause\colon& (1,\;\rule[.5ex]{3em}{.4pt}\;\midd \clause_1,\dots,\clause_k\middd 0, \lit_1,\dots,\lit_\ell)	\\
			0\colon& (\;\rule[.5ex]{3em}{.4pt}\;,0\middd 1,\clause_1,\dots,\clause_k)	\\				
			1\colon& (\;\rule[.5ex]{3em}{.4pt}\;,1\middd 0,\clause_1,\dots,\clause_k)\text,
	\end{array}
\]
where the horizontal lines stand for the players not explicitly mentioned in the list, the vertical lines divide the players into equivalence classes, and the players after the double vertical lines are unacceptable. We have introduced the new notation to improve readability: in the preference list of player $i$, vertical bars are for $\succ_i$ and commas are for $\sim_i$. 
 
We prove that $\form$ is satisfiable if and only if a NS (and IR) partition for $(N,\pref)$ exists. 

To this end, first assume that~$\valuation$ is a valuation that satisfies~$\form$. Then, define the partition~$\pi$ such that
$$\pi=\{\{1,\lit'_1,\dots,\lit'_{\ell'}\}, \{0,\lit''_1,\dots,\lit''_{\ell''}\}, \{\clause_1,\dots,\clause_k\}\}$$

where $\lit'_1,\dots,\lit'_{\ell'}$ are the literals rendered true by~$\valuation$ and $\lit''_1,\dots,\lit''_{\ell''}$ those that are thus rendered false. Obviously, $\pi$ is a favorite partition for both~$0$ and~$1$ and as such they do not want to deviate. With~$\valuation$ being well-defined as a valuation, no two ``literal'' players~$\prop$ and $\neg\prop$ are in the same coalition, \ie $\prop\in\pi(0)$ if and only if $\neg\prop\in\pi(1)$. Thus, every ``literal'' player is in a favorite coalition and does not want to deviate either. For each 
``clause'' player~$\clause=(\lit_1\vee\dots\vee\lit_\ell)$, both $\pi(0)$ and~$\pi(1)$ are unacceptable---the former, because $0\in\pi(0)$, the latter, because~$\valuation$ satisfies~$F$ and, therefore, at least one of~$\lit_1,\dots,\lit_\ell$ is in~$\pi(1)$. Thus, no player wishes to deviate to another coalition.

For the opposite direction, first observe that, if there is an individually rational partition~$\pi$ such that for each clause~$\clause=(\lit_1\vee\dots\vee\lit_\ell)$  at least one of~$\lit_1,\dots,\lit_\ell$ is  in~$\pi(1)$, the valuation~$\valuation$ that sets all literals in~$\pi(1)$ to true, satisfies~$\form$. In particular, observe that~$\valuation$ is thus well-defined, as for no Boolean variable~$\prop$, both $\prop$ and $\neg\prop$ can both be in~$\pi(1)$ without violating individual rationality.

Now assume that there is no valuation satisfying~$\form$ and consider an arbitrary NS partition structure~$\pi$. $\pi$ must also be IR. Then, by the previous observation, for some ``clause'' player $\clause=(\lit_1\vee\dots\vee\lit_\ell)$ none of~$\lit_1,\dots,\lit_\ell$ is in~$\pi(1)$, nor is~$0$,~$\clause$ himself, or any other ``clause'' player, as each of these are unacceptable to~$1$. Also by individual rationality, none of the ``literal'' players~$\prop$ nor~$0$ is in~$\pi(\clause)$. It follows that~$\clause$ would like to deviate and join~$\pi(1)$, so as to improve both the best and the worst player in his coalition. In that case though, we have an instance of the stalker game, with $X$ being the stalker of player $1$. It follows that~$\pi$ is not NS.
\end{proof}


\section{\BB- and \W-hedonic games: individual stability}

We first show that \BB- and \W-hedonic games may not admit an IS partition even if preferences are strict:

\begin{example}\label{example:B-IS-strict-counter}
Consider the game $(N,\pref)$ such that $N=\{1,\ldots, 5\}$ and $\pref$ is strict but allows unacceptability. For each player $i\in \{2,3,4\}$, his preferences look as following $i: i+1 \sPref_i i-1 \sPref_i i\succ_i \cdots$. Similarly, $1 \sPref_5 4 \sPref_5  5 \succ_5 \cdots$, and $2 \sPref_1 5 \sPref_1 1 \succ_1 \cdots$. 
We know that there are no IR coalitions of size more than or equal to $3$. The partition of singletons is certainly not IS. In fact, any IR partition and also potentially IS partition consists of one singleton and two acceptable pairs. Then, it can be seen that any IR and potentially IS partition cycles via IS deviations. Without a loss of generality, assume that the starting partition is $\{\{2,3\}, \{4,5\},\{1\}\}$. Then, the IS deviations lead to the following series of partitions:

\begin{enumerate}
\item $\{\{1\}, \{2,3\},\{4,5\}\}$; 
\item $\{\{5,1\}, \{2,3\},\{4\}\}$; 
\item $\{\{5,1\}, \{2\},\{3,4\}\}$; 
\item $\{\{1,2\}, \{3,4\},\{5\}\}$; 
\item $\{\{1,2\}, \{3\},\{4,5\}\}$; and then again 
\item $\{\{1\}, \{2,3\},\{4,5\}\}$.
\end{enumerate}
We will call $(N,\pref)$ the \em{extended stalker game}.
\end{example}


From Example~\ref{example:B-IS-strict-counter}, we know that if preferences are strict but allow unacceptable players, there may not exist any IS or strict core stable partition.


\begin{theorem}\label{theorem:b-is-hard}
Checking whether an IS partition exists is NP-complete for \BB-hedonic games with
strict preferences.
\end{theorem}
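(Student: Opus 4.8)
The plan is to prove NP-hardness by a reduction from \SAT, following the same template as Theorem~\ref{theorem:B-FP-hard} but now engineering gadgets that force \emph{individual} stability rather than Nash stability to fail. The overall architecture I would use: given a CNF formula $\form=\clause_1\wedge\dots\wedge\clause_k$ over variables $\prop_1,\dots,\prop_m$, build a \BB-hedonic game whose player set again contains two ``anchor'' players (playing the role of the truth-value assignment targets), a pair of ``literal'' players $\prop,\neg\prop$ for each variable, and one ``clause'' player $\clause$ per clause. The literal players should, as before, have a favorite configuration in which exactly one of each complementary pair sits with anchor $1$ and the other with anchor $0$; this is enforced by making $0$ and $1$ maximally attractive to every literal player and making the two anchors' favorite coalitions be precisely the two halves of a valuation-induced bipartition. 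The clause players are then attached so that the only way a clause player can be ``parked'' without wanting to perform an IS deviation is if the clause is satisfied by the literals currently grouped with anchor~$1$.

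The new difficulty compared to the Nash-stability reduction is that an IS deviation requires the \emph{consent} of the receiving coalition (no member of $T$ may be made worse off), so merely showing that a clause player \emph{wants} to move is not enough — I must also ensure the move is actually feasible, i.e.\ that the target coalition will accept him, OR conversely that in the unsatisfiable case there is some player who can always find a consenting coalition to move into, producing an infinite improvement cycle. This is exactly the phenomenon isolated in Example~\ref{example:B-IS-strict-counter} (the extended stalker game): a small cyclic gadget with no IS partition. So the key step is to replace the simple ``stalker'' sub-gadget used for Nash instability with an embedded copy of the extended-stalker cycle (on a constant number of fresh players) that gets ``activated'' precisely when some clause is falsified. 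Concretely, I would arrange that if $\form$ is satisfiable the activation trigger never fires and one exhibits an explicit IS (indeed IR and stable) partition built from the satisfying valuation exactly as in the proof of Theorem~\ref{theorem:B-FP-hard}; and if $\form$ is unsatisfiable, then in every IR partition some clause player is forced out of anchor~$1$'s coalition, which wires him into the extended-stalker gadget so that the IS-deviation dynamics cycle forever, hence no IS partition exists.

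The steps, in order: (1) fix the player set including the constant-size extended-stalker gadget $\{a_1,\dots,a_5\}$ (or a variant thereof) and a coupling mechanism between the clause players and that gadget; (2) write down all preference lists, using the $\midd/\middd$ notation, taking care that preferences are \emph{strict} — this requires breaking all the ties the \BB-construction would otherwise create, which is delicate because \BB-preferences over coalitions are induced from preferences over players and are inherently coarse, so the strictness has to come from the $|S|$-independent structure of \BB combined with carefully chosen acceptable/unacceptable cutoffs; (3) prove the ``satisfiable $\Rightarrow$ IS partition exists'' direction by exhibiting the valuation-based partition and checking no player has a \emph{consented} improving move; (4) prove the ``unsatisfiable $\Rightarrow$ no IS partition'' direction: take any putative IS partition, argue it must be IR, invoke the observation (as in Theorem~\ref{theorem:B-FP-hard}) that IR forces a well-defined partial valuation and that unsatisfiability forces some clause player $\clause$ to be separated from anchor~$1$, then show this configuration necessarily feeds the extended-stalker gadget an input on which it admits no IS partition — a contradiction; (5) note membership in NP follows since IS is polynomial-time verifiable (stated earlier), giving NP-completeness.

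I expect the main obstacle to be step~(2)–(4) interaction: making the reduction work \emph{with strict preferences} while \emph{simultaneously} keeping the extended-stalker cycle genuinely inescapable under IS dynamics. In the Nash case a lone stalker suffices because no consent is needed; for IS one must guarantee both that every ``good'' move in the cycle is consented to and that no ``escape'' move (to singletons or to some other coalition) is consented to or improving. Getting the acceptable/unacceptable thresholds and the relative rankings of the gadget players, literal players, and anchors all mutually consistent — so that the only stable-looking partitions are exactly the valuation partitions, and those survive iff the formula is satisfiable — is the technically heavy core of the argument, and is presumably why the authors flag this proof as ``technically more involved''.
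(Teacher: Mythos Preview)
Your high-level plan---reduce from \SAT\ and replace the simple stalker of Theorem~\ref{theorem:B-FP-hard} by the extended-stalker cycle of Example~\ref{example:B-IS-strict-counter}---is exactly right and matches the paper. But the concrete architecture you sketch differs from the paper's in a way that leaves a real gap at the coupling step.

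You propose keeping the global anchors $0,1$, one literal player per literal, one clause player per clause, and a \emph{single} constant-size extended-stalker gadget $\{a_1,\dots,a_5\}$ to be ``activated'' whenever some clause is falsified. The problem is that an extended-stalker cycle is stabilized as soon as \emph{one} suitable extra player is parked with its entry node (blocking one link of the cycle); conversely it is destabilized by a single absence. With a single gadget you therefore need its stabilizer to be present if and only if \emph{every} clause is satisfied---i.e.\ you must compute a $k$-way conjunction and feed the result to one gate. That is precisely the \SAT\ instance you are trying to encode, and your step~(4) (``this configuration necessarily feeds the extended-stalker gadget an input on which it admits no IS partition'') does not say how the coupling accomplishes it. Routing an unsatisfied clause player \emph{into} the gadget does not obviously help either: in the \BB\ model the newcomer is unacceptable to the gadget players, so they simply refuse the IS move and nothing is activated.

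The paper avoids this bottleneck by a different decomposition: it instantiates \emph{one extended-stalker gadget per clause} (players $1^X,\dots,5^X$ for each $X$), \emph{clones each literal once per clause it occurs in} (players $p^X,\neg p^X$), and replaces your global anchors by a \emph{per-variable} consistency anchor $0_p$. A true-literal clone $p^X$ sits with $1^X$ and locally stabilizes the $X$-gadget (its presence makes $\pi(1^X)$ unacceptable to $5^X$); the anchor $0_p$, strictly preferred by every $p^X$ and $\neg p^X$ to their own $1^X$, guarantees that in any IS partition at most one polarity of $p$ is deployed across the $1^X$'s---because whichever polarity is not already hosted by $0_p$ can always IS-deviate to it. Satisfiability is thus encoded clause-by-clause, and the argument in both directions becomes a short case check. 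Strictness of the player preferences is obtained simply by fixing an arbitrary linear order inside each block of the construction; your worry about ``inherently coarse'' \BB-preferences concerns the induced \emph{coalition} order, which need not be (and is not) strict.
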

\begin{proof}

By a reduction from \SAT. Let $\form=X_1\wedge\dots\wedge\clause_k$ be a Boolean
formula in conjunctive normal form in which all and only the Boolean
variables~$\prop_1,\dots,\prop_m$ occur.  Let $\clauses = \{X_1,\ldots,X_k\}$ 
and $\vars= \{p_1,\ldots,p_m\}$.  Without loss of generality, we will assume that there is 
no  clause $X\in \clauses$ and variable $p$, such that $X$ contains both literals
$p$ and $\neg p$. Also, we assume that for each variable $p$, both  literals 
$p, \neg p$ appear in $\form$.  For any clause $X\in\clauses$, 
let $L^X_+, L^X_-$ denote the sets of all positive and all negative literals in
$X$, respectively and let $L^X = L^X_+ \cup L^X_-$.  Now, we define the \BB-hedonic game $(N,\pref)$, where 


\begin{align*}
N & =  \{1^X, 2^X, 3^X, 4^X,5^X\mid X\in \clauses\} \cup 
	 \bigcup_{p\in\vars}\{ 0_p \}			\\
	&\cup \bigcup_{X\in \clauses}\{ p^X \midd p \in L^X_+ \}	
	\cup  \bigcup_{X\in \clauses}\{ \neg p^X | \neg p \in L^X_- \}
	 .\end{align*}

The main idea will be that unless $\form$ is satisfiable, an instance of the 
extended  stalker game will appear.
	
Let $C_p = \bigcup_{X\in \clauses}\{ p^X \midd p \in L^X_+ \}$, 
$C_{\neg p} = \bigcup_{X\in \clauses}\{\neg p^X |\neg p \in L^X_- \}$. 
$C_p$,  ($C_{\neg p}$) is the  set of players that correspond to all copies of
the positive (negative) literal $p$ (one copy for each clause in which the
literal appears). Also, let  $P =\bigcup_{p\in \vars}(C_p\cup C_{\neg p})$.
	
	Define the preferences~$\pref$ such that for each variable ~$\prop$, and each clause $X\in\clauses$,

\begin{align*}
p^X 	&: (0_p \mid  C_p\setminus\{p^X\} \mid 1^X \mid  
		P\setminus(C_p\cup C_{\neg p}) 	  \mid p^X  \middd \allelse) 	\\
\neg p^X &: (0_p \mid  C_{\neg p}\setminus\{\neg p^X\} \mid 1^X \mid  
		P\setminus(C_p\cup C_{\neg p}) \mid \neg p^X   \middd \allelse) 	\\
0_p &: (C_p \cup C_{\neg p}, \mid  0_p  \middd \allelse) 
 	\\
1^X 	&:  (2^X \mid  L^X \mid 5^X \mid  1^X  \middd \allelse) 	\\
2^X 	&:  (3^X \mid 1^X \mid  2^X  \middd \allelse) 	\\
3^X 	&:  (4^X \mid 2^X \mid  3^X  \middd \allelse) 	\\
4^X 	&:  (5^X \mid 3^X \mid  4^X  \middd \allelse) 	\\
5^X 	&:  (1^X \mid 4^X \mid  5^X  \middd \allelse) 	\\
\end{align*}

\noindent
where in the above, using a set of players $S$ in the preference of a
player implies any arbitrary ordering of the players in $S$. In the preference lists, like in the proof of 
Theorem~\ref{theorem:B-FP-hard}, the vertical lines divide the players into equivalence classes, and the players after the double vertical lines are unacceptable. Recall that we have introduced the new notation to improve readability: in the preference list of player $i$, vertical bars are for $\succ_i$ and commas are for $\sim_i$. Also, the horizontal lines stand for any arbitrary ordering of the players not explicitly
mentioned in the list.

We show that $\form$ is satisfiable if and only if $(N,\pref)$ has a non-empty set of IS partitions.

First assume that $\form$ is satisfiable and $\valuation$ is a valuation that
satisfies~$\form$. Let 
$T_\valuation=  \bigcup_{p:\valuation(p)=\textrm{true}} C_p  $,
$F_\valuation=  \bigcup_{p:\valuation(p)=\textrm{false}} C_{\neg p}  $.
Then, define the partition~$\pi$ as

\begin{align*}
          \{\{2^X,3^X\} \mid X\in \clauses  \}\cup \{\{4^X,5^X\} & \mid X\in \clauses \} \cup\\
          \{\{1^X\} \cup (T_\valuation \cap L^X)  & \mid X\in \clauses\} \cup\\
          \{\{0_p\}\cup (F_\valuation \cap(C_p\cup C_{\neg p})) &  \mid p\in \vars\}.
\end{align*}
	
	\begin{itemize}

	\item For each $X\in \clauses$, $1^X$ cannot join $\pi(2^X)$ because of the presence of $3^X$.  

	\item For each $X\in \clauses$, $2^X$ is only with $3^X$ and therefore
	in a most preferred coalition.

	\item For each $X\in \clauses$, $3^X$ cannot join $\pi(4^X)$ because of the presence of $5^X$. 

	\item For each $X\in \clauses$, $4^X$ is only with $5^X$ and therefore in a most preferred coalition.

	\item For each $X\in \clauses$, $5^X$ cannot join $\pi(1^X)$ because of
	the presence of a player $p^X$, for some $p\in \vars$. 

	\item For each $p\in \vars$, $0_p$ is together only 
	with a non-empty subset of $C_p\cup C_{\neg p}$, strictly preferring that
	to being alone.
	Any $p^X$ that  is not in the same coalition with $0_p$ is together 
	with $1^X$, therefore $0_p$ does not want to join.

	\item For each $p\in \vars$, either all players in $C_p$, or all
	players in $C_{\neg p}$ are together with $0_p$ and with no other players:
	assume that $C_{\neg p}$
	is the set of players with $0_p$ (the case $0_p\in C_p$ is proven similarly). 
	Then each player  $p^X\in C_p$ is
	together with $1^X$ and the players in $L^X_+$. They would prefer to
	join the coalition of $0_p$, but they are blocked by players in 
	$C_{\neg p}$. Players in $C_{\neg p}$ also cannot move to a better
	coalition, as they prefer their current coalition to being alone. 
	They would prefer a coalition only with players in $C_p\setminus\{p^X\}$,
	but all such players are either together with
	$0_p$, or with $1^Y$, for some $Y\neq X$.  
		
	\end{itemize}
	Therefore, $\pi$ is IS.

	For the other direction, assume that there is an IS
partition $\pi$. 
 Note first that for any $p\in \vars$ there can be no
$X,Y\in\clauses$, such that $p^X$ and $\neg p^Y$ are in the same coalition in 
$\pi$. Moreover,  not  both $p^X$ and $\neg p^Y$ are each together
with $1^X$ and $1^Y$, respectively. Assume that this was the case. Then, at least
 one of them
would be able to move to $0_p$ to take part in a more preferred coalition:
Either $0_p$ is alone and welcomes either of them, or it is together
with players either from $C_p$, or from $C_{\neg p}$, 
which would welcome $p$ or $\neg p$, respectively.
Therefore, for any variable $p$, either only members of $C_p$ or only members 
of $C_{\neg p}$ can be together with players $1^{X_1}, \ldots, 1^{X_k}$.
Notice now that there can be no clause $X^*\in \clauses$, such that
no $p^{X^*}$ is together with $1^{X^*}$, for any $p\in \vars$. 
For the sake of contradiction, assume that this were the case.
Then $1^{X^*}$ would be alone, and $5^{X^*}$ would
break off his coalition to join $1^{X^*}$. But then this would lead to
a series of deviations, with $1^{X^*}$ breaking the coalition with $5^{X^*}$
to join $2^{X^*}$, etc. Therefore, $\pi$ would not be an IS partition.
Therefore, for each $X\in \clauses$, there is some $p^X$ that is together 
with $1^X$. 
Consider now the following valuation $v_\pi$: 
For any $p\in \vars$, assign $v_\pi(p)=\textrm{true}$, if and only if 
 there is at least one player $p^X$ in $\pi$ that is in the same coalition 
as $1^X$, for any $X\in \clauses$. From the above, $v_\pi$ is a  
 a valid truthful assignment for $\form$, i.e., $\form$ is satisfiable. 
\end{proof}


The proof of the previous statement can also be used to state the following.

\begin{theorem}\label{theorem:ww-bb-srict-ns-hard}
Checking whether an NS partition exists is NP-complete for \BB-hedonic games or \WW-hedonic games with
strict preferences.
\end{theorem}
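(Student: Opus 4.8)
The plan is to recycle, for both \BB- and \WW-hedonic games, the reduction from \SAT used in the proof of Theorem~\ref{theorem:b-is-hard}. Membership in NP is immediate: as already noted, Nash stability of a given partition is checkable in polynomial time, so the existence question lies in NP in all cases. For NP-hardness, one direction of the \BB-case comes for free. Since every Nash stable partition is individually stable, if the \BB-game $(N,\pref)$ built from a formula $\form$ admits a Nash stable partition it admits an individually stable one, and hence, by Theorem~\ref{theorem:b-is-hard}, $\form$ is satisfiable.

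It therefore remains, for the \BB-case, to show that whenever $\form$ is satisfiable the partition $\pi$ exhibited in the proof of Theorem~\ref{theorem:b-is-hard} is not merely individually stable but Nash stable. I would revisit the bulleted verification there and observe that, with one exception, every candidate deviation of a player $i$ was ruled out because the target coalition already contains a player unacceptable to $i$; and in a \BB-hedonic game no player can ever strictly gain by entering a coalition that contains a player unacceptable to him (immediate from clause~(1) of the definition of \BB-preferences). The sole step where that proof instead appealed to ``the newcomer is unwelcome''---a literal player $p^X$ (resp.\ $\neg p^X$) wishing to join $\pi(0_p)$---is likewise not profitable, because $\pi(0_p)$ equals $\{0_p\}\cup C_{\neg p}$ (resp.\ $\{0_p\}\cup C_p$) and $C_{\neg p}$ (resp.\ $C_p$) is unacceptable to $p^X$ (resp.\ $\neg p^X$). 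Finally, every coalition of $\pi$ has at least two players and is individually rational, so, preferences over players being strict, no player gains by becoming a singleton; hence $\pi$ has no profitable unilateral deviation and is Nash stable.

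For the \WW-case I would read the \emph{same} player-preference lists as a \WW-hedonic game, extended via the worst rather than the best player. The key starting remark is that ``$S$ is acceptable to $i$'' is equivalent to ``every member of $S$ is acceptable to $i$'' under either extension, so the individual-rationality structure---and hence the combinatorial skeleton of the whole argument---is unchanged. For ``$\form$ satisfiable $\Rightarrow$ a Nash stable partition exists'', the same $\pi$ again works: replaying the case analysis with ``worst'' in place of ``best'', the players $2^X$, $4^X$ and every hub player $0_p$ sit in a most preferred coalition, while for every other player $i$ and every coalition $i$ could move to, that coalition either contains a player unacceptable to $i$ or its worst member (as seen by $i$) is no more preferred than the worst member of $i$'s present coalition---so in either case $i$ does not strictly prefer it. For the converse, Theorem~\ref{theorem:b-is-hard} cannot be quoted verbatim (it is phrased for \BB-games), so I would replay its combinatorial argument over a Nash stable---and therefore individually rational---partition $\pi$: no coalition of $\pi$ holds both a positive and a negative copy of a variable; for each variable $p$ only copies of one polarity can sit with the players $1^{X_1},\dots,1^{X_k}$; and if some clause $X^*$ had no literal copy sitting with $1^{X^*}$, then $1^{X^*}$ would be alone, whereupon $5^{X^*}$ could strictly gain by joining it (its worst coalition member then being $1^{X^*}$, the player it most prefers), contradicting Nash stability. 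Hence every clause's ``$1$''-player sits with some literal copy, and setting $p$ true exactly when some positive copy of $p$ sits with a ``$1$''-player yields a satisfying assignment.

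The step I expect to be the real obstacle is this \WW-verification, concretely two points that have to be re-proved rather than transported because---unlike the best-player rule---the worst-player rule is sensitive to acceptable-but-low members of a coalition: (a) that the clause cluster $\{1^X,\dots,5^X\}$ with a lone $1^X$ still admits a \emph{profitable} (not merely a welcome) unilateral deviation, which is what forces every clause to be covered; and (b) that the hub player $0_p$ still blocks any literal copy from joining it, now for the reason that the copies already at the hub are of the opposite polarity and hence unacceptable to the entrant. Both are routine case checks of the same flavour as in the proof of Theorem~\ref{theorem:b-is-hard}. Finally, since \WW-hedonic games coincide with \W-hedonic games on individually rational---in particular, Nash stable---outcomes, this also settles the corresponding question for \W-hedonic games with strict preferences.
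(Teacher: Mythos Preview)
Your proposal is correct and follows exactly the route the paper itself takes: the paper's entire proof of Theorem~\ref{theorem:ww-bb-srict-ns-hard} is the one-line remark that the construction from Theorem~\ref{theorem:b-is-hard} can be reused, and you have spelled out precisely the two verifications this entails---that the exhibited partition~$\pi$ is in fact Nash (not merely individually) stable under either extension, and that the converse direction goes through for \WW\ by replaying the combinatorial argument. One small caution: the claim ``$1^{X^*}$ would be alone'' is not literally forced (it could sit with $2^{X^*}$ or $5^{X^*}$), but the five-cycle $\{1^{X^*},\dots,5^{X^*}\}$ is isolated by individual rationality and admits no Nash stable configuration in any case, so the conclusion stands.
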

 
Finally, we have the following result.

\begin{theorem}\label{theorem:w-is-hard}
Checking whether an IS partition exists is NP-complete for \W-hedonic games.
\end{theorem}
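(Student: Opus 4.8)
The plan is to reduce from \SAT, closely following the construction behind Theorem~\ref{theorem:b-is-hard} but reworking it for the \W semantics and exploiting that the present statement, unlike Theorem~\ref{theorem:b-is-hard}, does \emph{not} restrict attention to strict preferences. Given a CNF formula $\form = X_1 \wedge \dots \wedge X_k$ over variables $p_1,\dots,p_m$, I would build a \W-hedonic game $(N,\pref)$ whose players consist of, for each clause $X$, the five players $1^X,\dots,5^X$ of an \emph{extended stalker game} (Example~\ref{example:B-IS-strict-counter}), and, for each variable $p$, a player $0_p$ together with one ``literal copy'' $p^X$ (resp.\ $\neg p^X$) for every clause $X$ containing the literal $p$ (resp.\ $\neg p$); write $C_p$, $C_{\neg p}$ for these sets of copies. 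The gadgets are linked only in that $1^X$ additionally finds the literal copies of the literals of $X$ acceptable, placed in its preference order just above $5^X$, while $1^X,2^X,\dots,5^X$ find every other player unacceptable; each copy $p^X$ finds $0_p$ best, then the remaining copies in $C_p$, then $1^X$ acceptable, with the complementary copies in $C_{\neg p}$ (and everything outside its two gadgets) unacceptable; and $0_p$ likes exactly the copies in $C_p \cup C_{\neg p}$. Membership in NP is immediate, since an IS partition is verifiable in polynomial time.

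For soundness, from a satisfying valuation $\valuation$ I would take the partition that, for each $X$, forms $\{2^X,3^X\}$ and $\{4^X,5^X\}$, puts $1^X$ together with all copies of literals of $X$ that $\valuation$ makes true (a non-empty set, since $\valuation \models X$), and, for each $p$, puts $0_p$ together with all copies of whichever of $p,\neg p$ is made false by $\valuation$. Then one checks there is no profitable IS deviation: the players $2^X,\dots,5^X$ are pinned exactly as in the extended stalker game, with $5^X$ unable to break into $1^X$'s coalition because the literal copies anchoring $1^X$ are unacceptable to $5^X$; $1^X$ sits in an acceptable coalition and can only lose by joining $2^X$'s coalition, which contains the unacceptable $3^X$; and each literal copy is either already with $0_p$ or is with $1^X$ and is blocked from $0_p$ since $0_p$'s coalition contains a complementary copy unacceptable to it.

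For completeness I would argue contrapositively from an IS partition $\pi$: (i)~by unacceptability, no coalition contains both $p^X$ and $\neg p^Y$; (ii)~for each variable $p$, the copies in $C_p$ and those in $C_{\neg p}$ cannot both be ``active'', i.e.\ have some member sharing a coalition with the player $1^Y$ of the clause $Y$ it belongs to, since otherwise such a member could profitably move to $0_p$ (whose coalition, consistent by (i), welcomes it) --- so the valuation $\valuation_\pi$ that sets $p$ true exactly when a copy of $p$ shares a coalition with its $1^Y$ is well defined; and (iii)~for each clause $X$, some literal copy of $X$ shares the coalition of $1^X$, for otherwise $\pi$ restricted to $\{1^X,\dots,5^X\}$ is (by individual rationality) a partition of a plain extended stalker game, which by Example~\ref{example:B-IS-strict-counter} admits no IS partition, contradicting that $\pi$ is IS. By (iii), $\valuation_\pi$ satisfies every clause, so $\form$ is satisfiable.

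The main obstacle --- and the reason this needs its own proof rather than an appeal to Theorem~\ref{theorem:b-is-hard} --- is that under \W-preferences a coalition is judged by its \emph{worst} acceptable member rather than its best, so every incentive argument above involving a coalition that grows or shrinks must be re-derived with this semantics. Concretely, the preference lists of the literal copies $p^X$ must be laid out --- using the permitted indifferences --- so that the worst companion $p^X$ sees while anchored to $1^X$ in the intended partition is no worse than what it would see next to $0_p$, yet in every ``dishonest'' configuration some copy still has a strictly improving move; and, simultaneously, adding a literal copy to a clause gadget must never make any of $1^X,\dots,5^X$ strictly worse off, so that the stalker cycle remains the only obstruction. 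Making all these tier orderings mutually consistent across the gadgets, so that unsatisfiability of $\form$ is the sole reason an IS partition can fail to exist, is where the real work lies; the NP membership, the stalker cycle, and the unacceptability-based blocking carry over essentially unchanged from the earlier proofs.
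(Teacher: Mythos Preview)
Your proposal is essentially correct and follows the same route as the paper: reuse the reduction of Theorem~\ref{theorem:b-is-hard} verbatim except for the preference lists of the literal copies $p^X,\neg p^X$ and the variable anchors~$0_p$, flattening their tiers with indifferences so that the \W{} (worst-player) semantics does not break the incentive analysis. The paper's proof is in fact nothing more than writing down those three modified lists and asserting that the rest of the argument from Theorem~\ref{theorem:b-is-hard} goes through.

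Two points where your first-paragraph sketch needs exactly the tightening you flag in your final paragraph, and which the paper resolves explicitly: (1)~the cross-variable copies $q^X$ with $q\neq p$ must be acceptable to $p^X$ and sit in the same tier as~$1^X$ (the paper puts all of $P\setminus(C_p\cup C_{\neg p})$ indifferent to~$1^X$), otherwise in the soundness direction $p^X$ would see an unacceptable worst companion inside $\pi(1^X)$; and (2)~$0_p$ must be \emph{indifferent} among all copies and himself rather than strictly liking them, since under \W{} a strict order on copies would let $0_p$ block an incoming $p^X$ in step~(ii) of your completeness argument whenever $p^X$ happens to be the lowest-ranked copy. With those two tier-collapses --- which are precisely the ``permitted indifferences'' you anticipate needing --- your outline coincides with the paper's proof.
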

\begin{proof}
The reduction is same as in the proof of Theorem~\ref{theorem:b-is-hard} except that preferences of certain players are as follows:

\begin{align*}
p^X 	&: (0_p, C_p\setminus\{p^X\} \mid 1^X, P\setminus(C_p\cup C_{\neg p})
	  \mid p^X  \middd \allelse) 	\\
\neg p^X &: (0_p, C_{\neg p}\setminus\{\neg p^X\} \mid 1^X, 
		P\setminus(C_p\cup C_{\neg p}) \mid \neg p^X   \middd \allelse) 	\\
0_p &: (C_p, C_{\neg p}, 0_p  \middd \allelse) 
\end{align*}
\end{proof}

It is not known whether Theorem~\ref{theorem:w-is-hard} also holds for \W-hedonic games with strict preferences.
However, there is a certain condition under which an IS partition can be computed in polynomial time. The roommate problem is a generalization of the stable marriage problem in which each agent has preferences over the other agents and then the agents are paired up in a stable manner~\citep{Irvi85a}. The problem can be seen as hedonic game called \emph{roommate game} in which only coalitions of size one or two are acceptable. We see that the same preferences of players over other players can represent roommate games and also \W-hedonic games. 
For roommate games with strict preferences, a core stable matching corresponds to a strict core stable partition for the corresponding \W-hedonic game~\citep{CeRo01a}. 
 Therefore, the algorithm of \citet{Irvi85a} can be used to check whether a core stable matching exists for the roommate games with strict preferences and if it exists, then the matching is a strict core stable and thereby IS partition for the \W-hedonic game with the same preference profile.

\section{\B-hedonic games: Nash \& individual stability}

We start with an easy observation.

\begin{observation}\label{obs:B-grand}
For \B-hedonic games, if each player likes at least some player, then the partition consisting of the grand coalition is Nash stable.
\end{observation}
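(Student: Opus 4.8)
The plan is to exploit the fact that in the partition $\pi=\{N\}$ consisting of the grand coalition there is essentially only one deviation available to any player. Since $N$ is the only coalition in $\pi$, a player $i$ contemplating a Nash deviation can only move to the empty coalition, i.e.\ split off and become the singleton $\{i\}$; joining an ``existing'' non-empty coalition other than $\pi(i)=N$ is impossible because no such coalition exists. Hence it suffices to show that no player strictly prefers being alone to being in $N$, that is, $N\succ_i\{i\}$ for every $i\in N$ (indeed $N\succsim_i\{i\}$ would already suffice).

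To establish $N\succ_i\{i\}$, I would first recall the convention $\max_{\succsim_i}(\{i\}\setminus\{i\})=\max_{\succsim_i}(\emptyset)=\{i\}$, so that in the comparison dictated by \B-preferences the ``best player'' representing the coalition $\{i\}$ is $i$ himself. By hypothesis $i$ likes some player, i.e.\ there is a player $j$ with $j\succ_i i$ (and $j\neq i$, since $\succ_i$ is irreflexive). Now pick any $s\in\max_{\succsim_i}(N\setminus\{i\})$; since $j\in N\setminus\{i\}$ we have $s\succsim_i j$, and transitivity of $\succsim_i$ together with $j\succ_i i$ gives $s\succ_i i$. Thus for every $s\in\max_{\succsim_i}(N\setminus\{i\})$ and the unique $t=i\in\max_{\succsim_i}(\{i\}\setminus\{i\})$ we have $s\succ_i t$, which is precisely clause~(1) in the definition of \B-preferences, whence $N\succ_i\{i\}$.

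Combining the two observations, every player strictly prefers his coalition $N$ in $\pi$ to the only coalition he could deviate to, namely $\{i\}$; so no player has a beneficial Nash deviation and $\pi=\{N\}$ is Nash stable. There is no genuine obstacle in this argument — it is a direct unfolding of the definition of \B-preferences — and the only points that deserve a sentence of care are that in the grand-coalition partition the sole available deviation is to a singleton, the convention $\max_{\succsim_i}(\emptyset)=\{i\}$, and the use of transitivity to upgrade ``$i$ likes some player'' to the strict inequality $s\succ_i i$ for the actual best player $s$ of $N\setminus\{i\}$.
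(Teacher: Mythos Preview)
Your argument is correct and is exactly the natural unfolding of the definitions; the paper itself does not give a proof, merely recording this as an ``easy observation.'' The three points you flag (the only Nash deviation from $\{N\}$ is to a singleton, the convention $\max_{\succsim_i}(\emptyset)=\{i\}$, and transitivity yielding $s\succ_i i$ for the best $s\in N\setminus\{i\}$) are precisely what one needs, and nothing more is required.
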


For \B-hedonic games, even if preferences allow no unacceptability of players, a NS partition may still not exist. Consider a two-player \B-hedonic game in which player 1 finds player 2 simply acceptable and 2 likes 1. Then the grand coalition is not NS. Although, NS partitions may not exist for \B-hedonic games, one can efficiently check whether a NS partition exists if preferences are strict. The result holds not only for strict preferences but also for preferences in which there may be ties but they satisfy the \emph{unique favorite property} (if a player likes some other players, he has a unique favorite player).

\begin{theorem}\label{th-ns-strict-b}
For \B-hedonic games which satisfy the unique favorite property, there exists a linear time algorithm to check whether a NS partition exists or not. 
\end{theorem}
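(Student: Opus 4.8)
The plan is to give a constructive linear-time procedure that, given a \B-hedonic game with the unique favorite property, either outputs a Nash stable partition or correctly reports that none exists. The starting point is the characterization of who a player wants to be with in a \B-hedonic game: a player $i$ is in a most preferred coalition precisely when the best player in $i$'s coalition is $i$'s (unique, by hypothesis) favorite player $f(i)$, subject to the tie-breaking on coalition size. So, intuitively, in a Nash stable partition every player who likes someone wants to be in a coalition whose best element is his favorite; players who like nobody want to be alone (unless forced otherwise). First I would make the clean observations: if $i$ likes nobody, then in any NS partition $i$ must be a singleton, and no player $j$ with $f(j)=i$-but-not-liked... more carefully, any player who does not like $i$ but is liked by $i$ is irrelevant here; the obstruction is a player $j$ with $\pi(j)=\pi(i)$ whom $i$ does not like. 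Then I would note that if $i$ likes nobody, any NS partition has $\pi(i)=\{i\}$, and consequently any player $j$ with favorite $f(j)=i$ can never be satisfied if $i$ refuses company — so we get a "stalker"-type obstruction, and the game has no NS partition, which can be detected immediately.

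The core of the argument is to analyze the "favorite" functional graph $G$ on the players who like someone: each such $i$ points to $f(i)$. This is a functional graph, so each weakly connected component consists of a single cycle (possibly a self-loop is impossible since $f(i)\neq i$, so cycles have length $\geq 2$) with trees hanging off it. The key step is to argue that in a NS partition the structure is forced: a player $i$ who likes someone can only be content in a coalition $S$ with $\max_{\pref_i}(S\setminus\{i\}) = \{f(i)\}$ and, by the size tie-break, $S$ should be as small as possible subject to that. I would show that this essentially forces each coalition to be built around a "mutual-favorite" pair or a favorite-cycle: players on a cycle $i_1 \to i_2 \to \cdots \to i_r \to i_1$ cannot all be simultaneously happy unless $r=2$ (a mutually-favorite pair $\{i,j\}$ with $f(i)=j$, $f(j)=i$), because on a longer cycle someone's favorite is not present in his coalition or a size-reduction deviation is available; then tree-players hanging off the structure attach greedily to the coalition containing their favorite, provided doing so does not destroy that coalition's "best player" — which it won't, since they are attaching to a strictly better player. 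The algorithm then is: find mutual-favorite pairs, form those coalitions; attach each remaining player who likes someone to the coalition containing his favorite, processing in an order that respects the tree structure (leaves first, or equivalently repeatedly attaching any player whose favorite is already placed in a coalition where he is not the top); and place the remaining "likes-nobody" players as singletons, after first checking the stalker condition. Finally, verify the resulting partition is NS in linear time, and argue that if this construction fails the stalker test (or a cycle of length $>2$ cannot be resolved) then no NS partition exists.

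The main obstacle I anticipate is handling the size tie-breaking in \B-preferences cleanly: because a player strictly prefers a smaller coalition among those with the same best player, adding tree-players to a coalition can make incumbent players (other than the top one) want to leave for a smaller coalition with the same favorite on top, and one must check that no such beneficial unilateral deviation exists in the constructed partition. I would deal with this by arguing that any player $i$ in coalition $S$ with $f(i)=\max(S\setminus\{i\})$ present has no strictly smaller coalition $T$ available in $\pi$ with $f(i)\in T$ and $\max(T\setminus\{i\})=f(i)$, since by construction the only coalitions containing $f(i)$ as the (or a) top player are the ones we built around $f(i)$, and $i$ moving there either doesn't decrease size or isn't available — here the processing order is what guarantees consistency. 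A secondary subtlety is players who like someone but whose favorite likes nobody (so the favorite insists on being a singleton): such a player is a stalker and triggers non-existence, which must be checked before committing. Once these cases are correctly enumerated, each step (building the functional graph, finding 2-cycles, the greedy tree attachment, the final verification) is clearly linear in the size of the representation, giving the claimed bound.
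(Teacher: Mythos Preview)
Your approach contains a genuine gap: the claim that ``players on a cycle $i_1 \to i_2 \to \cdots \to i_r \to i_1$ cannot all be simultaneously happy unless $r=2$'' is false. If you place all $r$ cycle players into a single coalition $S=\{i_1,\dots,i_r\}$, then every $i_j$ has his unique favorite $f(i_j)=i_{j+1}$ in $S$, so $\max_{\pref_{i_j}}(S\setminus\{i_j\})=\{f(i_j)\}$. You worry that a size-reduction deviation is then available, but Nash stability only permits deviations to \emph{existing} coalitions of the partition (or to the empty coalition). If the only other coalitions in $\pi$ are singletons of ``likes-nobody'' players, then no smaller coalition containing $f(i_j)$ exists, and hence no such deviation is possible. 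You are conflating ``$i_j$ would prefer a hypothetical smaller coalition containing $f(i_j)$'' with ``$i_j$ has a profitable NS deviation''. The whole functional-graph decomposition, the search for mutual-favorite pairs, and the tree-attachment procedure are therefore unnecessary, and your worry about the size tie-break destabilizing the construction when you attach tree players is unfounded for the same reason.

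This is precisely the simplification the paper exploits. Let $A$ be the set of players who like nobody; in any NS partition every $a\in A$ must be a singleton. If some $j\notin A$ has $f(j)\in A$, you have the stalker obstruction and return ``no'' (you correctly identified this case). Otherwise, output $\pi = \{N\setminus A\}\cup\{\{a\}:a\in A\}$. Every $j\in N\setminus A$ then has $f(j)\in N\setminus A$, so his unique favorite is present in $\pi(j)$; the only alternative coalitions are $\{a\}$ with $a\in A$ (where $a\neq f(j)$, hence $f(j)\sPref[j] a$) and the empty coalition, so $j$ has no profitable deviation regardless of the size tie-break. Players in $A$ do not like anyone and are already singletons. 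That is the entire algorithm, and it is trivially linear. The size tie-break in \B-preferences only matters when a \emph{strictly smaller existing} coalition with the same best player is available; the near-grand-coalition construction simply guarantees there is none.
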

\begin{proof}
If each player likes at least one (acceptable) player, then the partition consisting of the grand coalition is NS and we are done. If there exists a set of players $A$ who find everyone else at best acceptable, form the partition $\{N\setminus A\} \cup \{\{i\}\mid i\in A\}$. Note that each player $j\in N\setminus A$ likes at least one other player $k\in N$. Also any NS partition will have players in $A$ as singletons. Now for any $j\in N\setminus A$, if $j$'s favorite player is in $A$, then return `no'. Else, return the partition $\pi= (N\setminus A) \cup \{\{i\}\mid i \in A\}$.

We first show that if there exists a player $j\in N\setminus A$ who has a favorite player in $A$, then there does not exist an NS partition. In any NS partition, players in $A$ would be singletons but if a player $j\in N\setminus A$ has a favorite player in $A$, then $j$ proves to be a stalker of some singleton player $a\in A$.   

We now show that if a NS partition exists, then partition $\pi$ is NS. Each player in set $(N\setminus A)$ has his (unique) favorite player in coalition $N\setminus A$. Also, the singleton players of $A$ do not like any other player so would rather remain alone.
\end{proof}

For the general preferences, as long as every player likes at least one other player, the grand coalition is NS. 
The problem becomes challenging if the unique favorite property is not satisfied and there exist players which do not like any other player. Although the case with no preference restrictions is open, it is equivalent to a restricted problem. 

\begin{theorem}\label{th-ns-strict-b}
The complexity of checking the existence of a NS partition for \B-hedonic games is equivalent to the same problem for \B-hedonic games with no unacceptability. 
\end{theorem}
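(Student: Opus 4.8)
The plan is to establish polynomial-time reductions in both directions. One direction is immediate, since \B-hedonic games with no unacceptability form a subclass of all \B-hedonic games, so any algorithm for the general existence problem already decides the restricted one. For the converse, given an arbitrary \B-hedonic game $(N,\succsim)$, I would build a \B-hedonic game $(N,\succsim')$ on the \emph{same} player set in which $\succsim'_i$ agrees with $\succsim_i$ on the players that $i$ strictly likes (preserving both their mutual ranking and their placement above $i$), but promotes every player $j$ with $j\prec_i i$ up into $i$'s own indifference class, so that $j\sim'_i i$. Then $j\succsim'_i i$ for all $i,j$, i.e.\ $(N,\succsim')$ has no unacceptability, and the construction is linear-time. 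It then suffices to prove that $(N,\succsim)$ admits a NS partition if and only if $(N,\succsim')$ does; I would obtain this from the stronger claim that a partition $\pi$ is NS in $(N,\succsim)$ if and only if it is NS in $(N,\succsim')$.

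The first ingredient is the elementary fact about \B-preferences that for any coalition $S\ni i$ with $S\neq\{i\}$ one has $S\succsim_i\{i\}$ if and only if $S$ contains a player that $i$ strictly likes: if the $\succsim_i$-best co-member of $i$ in $S$ is strictly liked, clause~(1) of the \B-definition already gives $S\succ_i\{i\}$; if that best co-member is unacceptable, clause~(1) gives $\{i\}\succ_i S$; and if it is merely acceptable, clause~(2) together with $1<|S|$ gives $\{i\}\succ_i S$. The identical reasoning holds in $(N,\succsim')$. An immediate consequence is that \emph{every} NS partition of either game has the structural property that each non-singleton coalition contains, for every one of its members $i$, a co-member that $i$ strictly likes --- since Nash stability forbids $i$ from deviating to $\{i\}$ and hence requires $\pi(i)\succsim_i\{i\}$.

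The second ingredient is that, restricted to partitions $\pi$ having this structural property, the comparisons on which Nash stability depends --- namely $\pi(i)$ against $T\cup\{i\}$ for $T\in\pi\cup\{\emptyset\}$ --- come out the same under $\succsim_i$ and under $\succsim'_i$. When $\pi(i)=\{i\}$ this is just the fact above applied to $T\cup\{i\}$. When $\pi(i)$ is a non-singleton, its $\succsim_i$-best co-members of $i$ are exactly the top-ranked strictly-liked members of $\pi(i)$, which are the same players at the same ranks in $\succsim'$; so if $T$ contains a player liked by $i$ the comparison reduces to an identical comparison of top liked players, with an identical size tie-break, and if $T$ contains no player liked by $i$ then in both games the best co-member of $i$ in $\pi(i)$ strictly beats the best member of $T$, so $\pi(i)$ is strictly preferred in both. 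Combining the two ingredients yields the equivalence of NS partitions between the two games, and hence the theorem.

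I expect the main obstacle to be the careful case bookkeeping in the second ingredient: one must split on whether $T$ is empty, whether $T$ contains a player liked by $i$, and whether the relevant ``best'' comparisons are strict or ties, checking in every subcase that promoting the unacceptable players into $i$'s indifference class does not flip the verdict. The reassuring observation is that the comparisons whose outcome can genuinely change under the transformation are exactly those involving a coalition with an unacceptable top co-member of $i$, and within a candidate NS partition such a coalition can never be $\pi(i)$; when it appears only as a deviation target $T\cup\{i\}$ against a structurally-good $\pi(i)$, the latter still comes out at least as well in both games (either a liked co-member of $i$ dominates, or, when $\pi(i)=\{i\}$, the size tie-break still favours $\{i\}$), so the stability verdict is unaffected.
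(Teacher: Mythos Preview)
Your proposal is correct and follows essentially the same route as the paper: the paper's proof also promotes every unacceptable player into $i$'s own indifference class (making them ``acceptable but not liked'') and then asserts, without further argument, that a partition is NS in the original game if and only if it is NS in the transformed game. Your write-up simply supplies the case analysis that the paper omits.
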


\begin{proof}
	The general problem is at least as hard as with the restriction of no unacceptability. 
If there exists a polynomial-time algorithm to solve the case for no unacceptability, then the general problem can also be solved in polynomial time: change the preference profile in the general problem so that unacceptable players are now acceptable but not liked. Then a partition is NS in the restricted case if and only if it is NS in the general case. 
\end{proof}

Whereas NS partitions may not exist for \B-hedonic games, we give a constructive argument for the existence of IS partitions. Therefore, we add \B-hedonic games to the following list of hedonic games and preference restrictions for which IS partition is guaranteed to exist: additively separable hedonic games with symmetric preferences and anonymous games with single-peaked preferences~\citep{BoJa02a}. This is contrast to B-hedonic games for which even checking the existence of an IS partition is NP-complete.

\begin{theorem}\label{th:is-exists}
For \B-hedonic games, an IS partition exists and it can be computed in linear time. 
\end{theorem}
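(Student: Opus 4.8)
The plan is to build an IS partition for a \B-hedonic game $(N,\pref)$ by a greedy/iterative construction that repeatedly ``finishes off'' a group of players who can all be placed in a coalition that is simultaneously a most-preferred coalition for each of them, and then recurse on the rest. The key observation is that in a \B-hedonic game, a player $i$ who likes at least one other player is in a most-preferred coalition precisely when his coalition contains one of his favorite players (a player in $\max_{\pref_i}(N\setminus\{i\})$ that $i$ actually likes) and has size $2$; more generally, once the best player in $i$'s coalition is fixed to be a favorite of $i$, adding further players only hurts $i$ by increasing the size, so $i$ is happiest in the smallest coalition containing a favorite player. So the construction should try to pair players up with their favorites, forming $2$-cycles (mutual favorites) or handling directed ``favorite'' chains, while the players who like nobody are simply left as singletons (which is most-preferred for them).

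Concretely, I would proceed as follows. First, set aside the set $A$ of players who like no one; each such player forms a singleton, which is a most-preferred coalition for him, so he never wants to deviate and no one can be blocked by his (empty) coalition in a harmful way. Second, on $N\setminus A$, consider the functional ``favorite'' graph where each remaining player points to one fixed favorite player. Look for a player $j$ whose favorite $f(j)$ is in $A$ or is already a singleton-committed player: then $j$ must go with $f(j)$? No — rather, I would look for mutual-favorite pairs $\{i,j\}$ with $f(i)=j$ and $f(j)=i$, place them as a coalition of size $2$ (most-preferred for both), remove them, and repeat. When no mutual pair remains among the un-placed non-$A$ players, every remaining player still points (via $f$) into the remaining set, so following pointers yields a cycle of length $\ge 2$; I would place the two endpoints of one edge of that cycle — or, more carefully, place $\{i, f(i)\}$ for a suitable $i$ on the cycle, making $i$ perfectly happy, and argue $f(i)$ is at least acceptable there so no harm and no blocking. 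Iterating, the set of unplaced players shrinks, so the process terminates in linearly many steps; each step touches each player's preference list a constant number of times, giving the linear-time bound.

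The verification that the resulting $\pi$ is IS has two parts. (i) No un-placed-at-the-end player — i.e., leftover players who ended up as singletons or in the chain coalitions — can make an \emph{improving} move into an \emph{existing} coalition without harming its members: players in $A$ and players placed together with their favorite are already in a most-preferred coalition and have no improving move at all; for a player $i$ placed with $f(i)$ where $i$ likes $f(i)$, the only coalitions $i$ strictly prefers are size-$2$ coalitions containing a favorite of $i$, and I must argue any such target coalition either doesn't exist in $\pi$ or its current occupant would be made worse off by $i$ joining (since it would grow). (ii) Symmetrically, no player can benefit by moving to a singleton of $A$: that would require the $A$-player to weakly accept the newcomer, but $A$-players like no one, so a coalition of size $2$ is strictly worse for them than their singleton — the move is blocked.

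The main obstacle I expect is case (i): carefully handling the directed-chain/cycle situation so that when I place $\{i,f(i)\}$, the player $f(i)$ (who may strictly prefer being with \emph{his} own favorite) does not have an improving IS-deviation in the final partition, and more generally ensuring that across iterations a player placed early cannot later be induced to move because some new small coalition containing one of his favorites was created downstream. Getting the ordering of placements right — e.g., always resolving mutual pairs first, and within a cycle choosing the edge to commit so that the ``downstream'' player's favorites all get used up or committed — is the delicate combinatorial heart of the argument; the rest is bookkeeping and a termination/complexity count.
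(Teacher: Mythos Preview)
Your approach has a genuine structural gap, and it is not the ``delicate bookkeeping'' you anticipate but rather that the pairing strategy is the wrong shape for this problem. Consider three players with strict preferences $2\succ_1 3\succ_1 1$, $3\succ_2 1\succ_2 2$, $1\succ_3 2\succ_3 3$. The set $A$ is empty and the favorite graph is a single $3$-cycle $1\to 2\to 3\to 1$ with no mutual pair. Your procedure commits some edge, say $\{1,2\}$, leaving $3$ as a singleton. But then player~$2$ strictly prefers $\{2,3\}$ to $\{1,2\}$ (his best player improves from $1$ to his favorite $3$), and player~$3$ strictly prefers $\{2,3\}$ to $\{3\}$ (he likes $2$), so $2$'s move to $\pi(3)$ is a valid IS deviation. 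By symmetry every choice of edge fails, and no ordering of placements rescues this: the leftover singleton is always a willing host for the unhappy partner. The difficulty is intrinsic to building many small coalitions, because each small coalition you create is a potential target that some outsider can join while being welcomed.

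The paper's construction goes in the opposite direction: it builds \emph{one} large coalition together with singletons. Start with the set $A$ of players who like no one as singletons and everyone else in a single block $N\setminus A$; then iteratively peel off into $B$ any player in the current block who no longer likes anyone still in the block, and return $\pi=\{N\setminus B\}\cup\{\{i\}\mid i\in B\}$. Players remaining in $N\setminus B$ each like someone there, so they strictly prefer the block to being alone, and any singleton in $B$ would be made strictly worse by their arrival (he does not like them, so by \B-preferences the size increase hurts), blocking the move. Conversely, each singleton $x\in B$ was peeled because he liked no one in the then-current block, hence likes no one in the final (smaller) block and no singleton peeled after him; the only singletons $x$ might like were peeled before him, and those players do not like $x$ (same argument with roles reversed), so they block. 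The peeling order thus furnishes exactly the acyclic ``who blocks whom'' structure that your pairing approach cannot manufacture.
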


\begin{proof}
If each player likes at least one other player, then the partition consisting of the grand coalition is IS and we are done. 
Otherwise, we maintain a variable set of players $B$ which will eventually converge to a fixed set and help return an IS partition. We also maintain a variable partition $\pi'=\{N\setminus B\} \cup \{\{i\}\mid i\in B\}$. 

If there exists a set of players $A$ who do not like any other player, then set the variable set of players $B$ to $A$ and form the partition $\pi'=\{N\setminus B\} \cup \{\{i\}\mid i\in B\}$. If there exists a player $j\in N\setminus B$ such that $j$ now also does not like any players in $N\setminus B$, then set $B$ to $A\cup \{j\}$ and update the partition $\pi'$ to $\{N\setminus B\} \cup \{\{i\}\mid i\in B\}$ where the updated $B$ is used. Since $j$ was not liked by any player in $A$, he is not allowed to join any of them. Repeat this process until there exists no $j'\in N\setminus B$ who likes some player in $N\setminus B$. Once, $B$ cannot be updated anymore, we return $\pi$ the final value of $\pi'=\{N\setminus B\} \cup \{\{i\}\mid i\in B\}$. 

We now show that the returned partition $\pi$ is IS. We do so by showing that none of the players in the following sets can have a feasible IS deviation: 1. $N\setminus B$, 2. $A$, and 3. $B\setminus A$. 

\begin{enumerate}
	\item Players in $N\setminus B$ like at least one other player in $N\setminus B$ and therefore would not prefer to become alone. Furthermore, players in $N\setminus B$ cannot join a player in $B$: no singleton player in $B$ is willing to welcome a player in $N\setminus B$. 
\item Players in $A\subseteq B$ do not like any one and prefer to stay alone. \item Each player in $x\in B\setminus A$ does not like any other player $y\in B\setminus A$ who left $N\setminus B$ later than $x$. Therefore $x$ will not welcome $y$ even if $y$ wanted to join $x$ and form coalition $\{x,y\}$. Therefore there is no deviation by singleton players in $B\setminus A$.
\end{enumerate}

We have shown that no player in $N$ has a feasible IS deviation. Thus $\pi$ is IS.
This completes the proof.
\end{proof}

We saw in this section that \B-hedonic games not only admit an IS partition but if the `unique favorite property' is satisfied, it can also be checked efficiently whether a NS partition exists.

\section{Conclusions}

In this paper, we analyzed hedonic coalition formation games based on the best or worst players. In particular, we considered the existence and computation of individual-based stability concepts in \B-, \W, \BB- and \WW-hedonic games. An almost complete characterization of the complexity of computing stable partitions was achieved (see Table~\ref{table:bestworst-complexity}). 
We showed that for \B-hedonic games, there exists a polynomial-time algorithm which returns an IS partition. For strict preferences, it can be checked in polynomial time whether a NS partition exists. For all other games, checking the existence of NS or IS partitions is intractable even for many cases when preference are strict. 
An open problem is the complexity of checking the existence of an IS partition for \W-hedonic games when preferences are strict. 
It was seen that in coalition formation games based on extensions of preferences over players to preferences over sets, the following property is a major source of intractability:
presence of an unacceptable player makes the coalition unacceptable.

Future directions of research include finding further restrictions on the preferences so that stability is guaranteed or is computationally feasible to analyze. For example, one can consider the case where there is a fixed ordering of the players (based on publicly known trait of the players) and each player has single peaked preferences in the ordering. 
Another interesting setting is when there exists a fixed global ordering of players and each player $i$ prefers more those players that are closer to $i$ in the ordering. Furthermore one may consider scenarios with communication restrictions among players, represented via a graph, such that only contiguous players can form a coalition. 




\end{document}